\DeclareMathOperator{\diag}{diag}
\DeclareMathOperator{\colspace}{colspace}
\newcommand{\ordinal}{\raisebox{0.7ex}{\scriptsize th}}
\begin{document}
\newcommand{\C}{\ensuremath{\mathbb C}}
\newcommand{\D}{\ensuremath{\mathbb D}}
\newcommand{\R}{\ensuremath{\mathbb R}}
\newcommand{\Z}{\ensuremath{\mathbb Z}}
\newcommand{\N}{\ensuremath{\mathbb{N}}}
\newcommand{\Q}{\ensuremath{\mathbb{Q}}}

\newcommand{\Ma}{\textsc{Maple}}
\newcommand{\Mat}[3]{{#1}^{#2\times#3}}
\newcommand{\ID}{\mathbf{1}}

\newtheorem{theorem}{Theorem}
\newtheorem{remark}{Remark}
\newtheorem{algorithm}{Algorithm}

\setcopyright{acmcopyright}
\doi{10.475/123_4}
\isbn{123-4567-24-567/08/06}

\conferenceinfo{ISSAC'16}{July 19--22, 2016, Waterloo, ON, Canada}

\acmPrice{\$15.00}

%
\CopyrightYear{2016}

\title{Matrix factoring by fraction-free reduction}

\numberofauthors{2}

\author{
\alignauthor
Johannes Middeke\\
       \affaddr{RISC Linz}\\
       \affaddr{Johannes Kepler University}\\
       \affaddr{Linz, Austria}\\
       \email{jmiddeke@risc.uni-linz.ac.at}
\alignauthor
David J. Jeffrey\\
       \affaddr{ORCCA and Dept Applied Mathematics}\\
       \affaddr{The University of Western Ontario}\\
       \affaddr{London, Ontario, Canada N6A 5B7}\\
       \email{djeffrey@uwo.ca}
}

\date{\today}

\maketitle
\begin{abstract}
We consider exact matrix decomposition by Gauss-Bareiss reduction. We investigate two aspects of the process: common row and column factors and the influence of pivoting strategies. We identify two types of common factors: systematic and statistical. Systematic factors depend on the process, while statistical factors depend on the specific data. We show that existing fraction-free QR (Gram--Schmidt) algorithms create a common factor in the last column of Q. We relate the existence of row factors in LU decomposition to factors appearing in the Smith normal form of the matrix. For statistical factors, we identify mechanisms and give estimates of the frequency. Our conclusions are tested by experimental data. For pivoting strategies, we compare the sizes of output factors obtained by different strategies. We also comment on timing differences.
\end{abstract}

\keywords{LU Decomposition; Fraction free; QR factors; Common factor removal;
pivoting strategy}

\section{Introduction}
Although known earlier, fraction-free methods for exact matrix computations
became popular after Bareiss's study of Gaussian elimination \cite{Bareiss:1968}.
Extensions to related topics, such as LU factoring, were considered in
\cite{LeeSaunders:1995, NakosTurnerWilliams:1997, ZhouJeffrey2008}.
Gram--Schmidt orthogonalization and QR factoring were studied by \cite{EKM:1996}, under the more descriptive name of exact division.
Recent studies have looked at extending fraction-free LU factoring to non-invertible matrices~\cite{Jeffrey2010LU} and rank profiling \cite{DumasPernetSultan2015}, and more generally to areas such as the Euclidean algorithm, and the Berlekamp--Massey algorithm~\cite{Kaltofen:2013:BM}.
We consider matrices over an integral domain $\D$.
For the purposes of giving illustrative examples and conducting computational
experiments, matrices over $\Z$ and $\Q[x]$ are used,
because the metrics associated with these domains are well established and
familiar to readers.
We emphasize, however, that the methods here 
apply for all integral domains, as opposed to methods that target
specific domains,
such as~\cite{GiesbrechtStorjohann2002,PauderisStorjohann2013}.

The starting point for this paper is the fraction-free form for LU
decomposition~\cite{Jeffrey2010LU}: given a matrix $A$ over an integral domain $\D$,
\begin{equation}\label{def:LDU}
A = P_w LD^{-1}U P_c\ ,
\end{equation}
where $L$, $D$ and $U$ are over $\D$. 
$L$ and $U$ are lower and upper triangular and their diagonals contain the pivots
of the Gaussian elimination; $D$ is diagonal and contains products of the pivots.  
The permutation matrices $P_w$ and $P_c$
ensure that the decomposition is always a full-rank decomposition, even if $A$
is rectangular or rank deficient.
In addition to the usual indeterminacy due to varying pivot choices, the columns of $L$ and the rows of $U$ can be multiplied by common factors,
which then appear also in $D$. We show in section \ref{sec:QR} that this form can cover $QR$ decomposition also.

Our first main result is for QR factoring.
In this context, the orthonormal $Q$ matrix used in floating point calculations
is replaced by a $\Theta$ matrix, which is left-orthogonal, i.e. $\Theta^t\Theta$ is
diagonal, but $\Theta\Theta^t$ is not. We show that for a square matrix $A$, the last column of $\Theta$, as calculated by existing algorithms, is subject to an exact division by the determinant of $A$, with a significant reduction in size.
This is an example of a systematic factor, being one inherent to the algorithm.

Systematic factors occur in several ways. The Bareiss algorithm
uses exact division precisely to remove systematic factors; the Gram--Schmidt
algorithm from \cite{EKM:1996} is another, where exact division removes
systematic factors during the reduction.
In addition to these, we add a
different type of systematic factor: we show a relation between GCDs existing for the rows in matrices obtained from LU factoring, and entries in the Smith normal form of the same initial matrix.

We next consider statistical factors: ones which depend on the initial data.
When $LU$ and $QR$ matrices are computed using current standard
fraction-free algorithms, their rows and columns may contain common factors.
We discuss their origins and show we can predict a significant proportion of them from simple considerations.
Their presence influences aspects such as uniqueness.
Specifically, for the basic decomposition \eqref{def:LDU},
we show how common factors can be moved between the three matrices.
We discuss when this is beneficial. 

We next consider the role of pivoting in Gaussian reduction.
Geddes \textsl{et al.}~\cite{GeddesCzaporLabahn:1992} comment
``We also mention that when the entries of $A^{(k)}$ are not of uniform size, it may be worthwhile to interchange rows in order to obtain a smaller pivot at the next step".
It is often said that whereas for \textit{floating-point} Gaussian elimination the
largest pivot should be chosen, in the setting of \text{exact computation} the smallest pivot is best. Although within the floating-point literature, pivoting has been studied over an extended period,
much less attention has been paid to the question in the context of exact computation.
We consider a number of strategies empirically, and show that selecting the smallest pivot, suitably defined, leads to smaller output matrices.

The paper will start with a brief discussion of fraction-free methods, then present results for QR factoring, LU factoring, and finally pivoting.

\section{Fraction-free methods}
Fraction-free methods are based on the assumption that
it is more efficient to compute with the elements of the input domain
of a matrix than to compute with elements from the quotient field.
Since the solutions usually sought require the quotient field, 
fraction-free methods can be regarded as delaying for as long as possible 
the ultimate fall from grace of the computation.
Here, some measurements are reported to supply empirical evidence
to support fraction-free methods.

Our first point of comparison is between the LU decomposition offered by
\Ma, through \texttt{LUDecomposition(A)}, 
and our own implementation of the $L
D^{-1} U$ decomposition based on \cite{Jeffrey2010LU}. The built-in
\Ma\ command returns matrices $L$ and $U$ such that all diagonal
elements of $L$ are $1$, and both $L$ and $U$ contain elements from
the quotient field of $\D$. The procedure which we implemented has the
output format described in \cite[Theorem~2]{Jeffrey2010LU}.

Figure~\ref{fig:LUvsGauss.size.int} shows the ratio of average storage
requirements for the decomposition of integer matrices. Here, we
measure the total number of digits needed to represent the final
output. Note that this metric does not depend on the internal
implementation of the two functions, nor does it depend on the
particular computer algebra system. As
figure~\ref{fig:LUvsGauss.size.int} illustrates, fraction-free methods
require roughly half the storage.

Table~\ref{tab:LUvsGauss.time.int} compares timings for random integer
matrices, as functions of matrix size and initial data size. For this
experiment we used our own implementation of Gaussian elimination,
since we do not know the details of \Ma's built-in procedure,
which may well use compiled code. By writing our own programme we
make sure that every common part, for example pivot searching, uses exactly
the same code and only the reduction steps differ. As
table~\ref{tab:LUvsGauss.time.int} reveals, the advantages of
the fraction-free method are clear, while not spectacular.

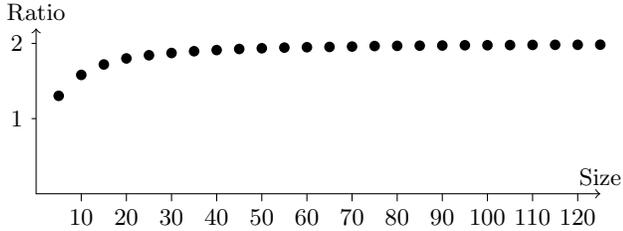
\begin{figure}
  \centering
\begin{tikzpicture}[x=0.06cm, step=1]
  \draw[<->] (0,2.2) node[above] {Ratio} |- (125,0) node[above] {Size};
  \foreach \y in {1,2} \draw (0,\y) -- (-1,\y) node[left] {$\y$};
  \foreach \x in {10,20,...,120} \draw (\x,0) -- (\x,-0.1) node[below] {$\x$};
  \fill (5,1.303055) circle (2pt);
  \fill (10,1.581456) circle (2pt);
  \fill (15,1.720346) circle (2pt);
  \fill (20,1.800852) circle (2pt);
  \fill (25,1.842972) circle (2pt);
  \fill (30,1.872806) circle (2pt);
  \fill (35,1.896453) circle (2pt);
  \fill (40,1.912146) circle (2pt);
  \fill (45,1.926344) circle (2pt);
  \fill (50,1.936022) circle (2pt);
  \fill (55,1.944593) circle (2pt);
  \fill (60,1.949722) circle (2pt);
  \fill (65,1.953816) circle (2pt);
  \fill (70,1.958742) circle (2pt);
  \fill (75,1.965640) circle (2pt);
  \fill (80,1.968056) circle (2pt);
  \fill (85,1.970237) circle (2pt);
  \fill (90,1.972478) circle (2pt);
  \fill (95,1.975701) circle (2pt);
  \fill (100,1.976291) circle (2pt);
  \fill (105,1.978391) circle (2pt);
  \fill (110,1.980522) circle (2pt);
  \fill (115,1.981887) circle (2pt);
  \fill (120,1.982245) circle (2pt);
  \fill (125,1.983518) circle (2pt);
\end{tikzpicture}
\caption{Comparison of the output size of Gaussian Elimination vs.\ $L
  D^{-1} U$. %
  \normalfont%
  We show the ratio of the number of digits in the output of Gaussian
  elimination divided by the number of digits in the output of the $L
  D^{-1} U$ decomposition for random square integer matrices of
  various sizes.}
  \label{fig:LUvsGauss.size.int}
\end{figure}

\begin{table}
  \centering
  \small
  \begin{tabular}{r|*{11}{r}}
    \backslashbox{$s$}{$n$}
    & 11    & 19    & 31    & 53    & 73    & 97    & 107   \\\hline
3   & 1.00 & 0.85 & 0.91& 0.78 & 0.72 & 0.63 & 0.62 \\
7   & 0.97 & 0.88 & 0.83& 0.71 & 0.65 & 0.59 & 0.56 \\
13  & 0.94 & 0.84 & 0.82& 0.68 & 0.61 & 0.55 & 0.52 \\
23  & 0.93 & 1.33 & 0.85& 0.66 & 0.59 & 0.51 & 0.49 \\
37  & 0.89 & 0.81 & 0.77& 0.63 & 0.56 & 0.49 & 0.47 \\
53  & 0.93 & 0.80 & 0.74& 0.62 & 0.55 & 0.47 & 0.45 \\
67  & 0.90 & 1.32 & 0.73& 0.60 & 0.54 & 0.46 & 0.44 \\
89  & 0.89 & 0.53 & 0.72& 0.61 & 0.53 & 0.45 & 0.43 \\
109 & 0.87 & 0.77 & 0.73& 0.60 & 0.52 & 0.44 & 0.42 \\
  \end{tabular}
  \normalsize
  \caption{Timings of $L D^{-1} U$ decomposition vs.\ Gaussian elimination
    for integer matrices. \normalfont The table shows the run times for
    $L D^{-1} U$
    divided by those for Gauss for random $n$-by-$n$ matrices with maximal
    entry size $s$. \label{tab:LUvsGauss.time.int}}
\end{table}

\section{Common factors in QR}
\label{sec:QR}
A fraction-free (exact division) algorithm for Gram--Schmidt orthogonalization
was described by \cite{EKM:1996}. An algorithm based on $LU$ factoring has been described
in \cite{PursellTrimble:1991,ZhouJeffrey2008}. The two approaches yield the same results.
We denote the decomposition by $A=\Theta D^{-1} R$, because $Q$ usually
denotes an orthonormal matrix, and $\Theta$ is not orthonormal.
We give a new statement of the basic theorem.

\begin{theorem}\label{thm:L=U^t}
Given a square, full-rank matrix $A$ over an integral domain $\D$, the partitioned matrix $(A^tA,A^t)$ has a fraction-free LU decomposition
\[ (A^tA, A^t ) = R^t D^{-1}(R, \Theta^t) \ , \]
where $\Theta^t\Theta=D$ and $A=\Theta D^{-1} R$.
\end{theorem}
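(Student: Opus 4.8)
The plan is to obtain the asserted decomposition by running the fraction-free LU algorithm of \cite{Jeffrey2010LU} on the $n\times 2n$ matrix $M:=(A^tA,A^t)$ and then identifying the blocks. First I would settle existence: since $A$ has full rank, $\det(A^tA)=(\det A)^2\neq 0$, so $M$ has full row rank $n$ and \eqref{def:LDU} applies. The pivot search inspects only the left block $A^tA$, whose $k$-th pivot is its $k$-th leading principal minor, i.e.\ the Gram determinant of the first $k$ columns of $A$; this is nonzero in the ordered domains used here (such as $\Z$ and $\Q[x]$), and I would simply assume it in general, the alternative being the permuted form \eqref{def:LDU}. Hence no interchange occurs and $M=LD^{-1}U$ with $L\in\Mat{\D}{n}{n}$ lower triangular, $D$ diagonal, $U=(U_1,U_2)$ with $U_1,U_2\in\Mat{\D}{n}{n}$ and $U_1$ upper triangular and invertible; and, by the stated property of the form, $L$ and $U_1$ carry the same diagonal (the pivots). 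Splitting the columns gives $A^tA=LD^{-1}U_1$ and $A^t=LD^{-1}U_2$.

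The crux is to show $L=U_1^t$. Transposing $A^tA=LD^{-1}U_1$ and using $(A^tA)^t=A^tA$ and $D^t=D$ produces a second factorization $A^tA=U_1^tD^{-1}L^t$. Equating the two and rearranging over the quotient field gives $(U_1^t)^{-1}LD^{-1}=D^{-1}L^tU_1^{-1}$; the left-hand side is lower triangular and the right-hand side is upper triangular, so both equal a single diagonal matrix $E$. Comparing diagonal entries, and using $L_{ii}=(U_1)_{ii}\neq 0$, forces $E=D^{-1}$, whence $L=U_1^t$. Setting $R:=U_1$ and $\Theta:=U_2^t$ now rewrites $M=LD^{-1}U$ as exactly $(A^tA,A^t)=R^tD^{-1}(R,\Theta^t)$.

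The two side conditions follow by bookkeeping. Transposing $A^t=R^tD^{-1}\Theta^t$ gives $A=\Theta D^{-1}R$. For $\Theta^t\Theta=D$, I would evaluate $A^tA$ in two ways: it equals $R^tD^{-1}R$ from the block identity, and $R^tD^{-1}\Theta^t\Theta D^{-1}R$ from $A=\Theta D^{-1}R$; since $R$, $R^t$ and $D$ are invertible over the quotient field, cancelling $R^tD^{-1}$ on the left and $D^{-1}R$ on the right leaves $\Theta^t\Theta=D$. As a check, the same cancellations show that these $R,D,\Theta$ agree with the fraction-free Gram--Schmidt data of \cite{EKM:1996,ZhouJeffrey2008}.

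I expect the main obstacle to be the identification $L=U_1^t$: one has to argue that the symmetry of $A^tA$ together with the normalization of the fraction-free form (the shared pivot diagonals) pins the factorization down uniquely, and the triangular-matrix cancellation above is the delicate step. The only other point that needs care is the ``no interchanges'' hypothesis on the leading principal minors of $A^tA$, which I would state explicitly rather than prove in full generality.
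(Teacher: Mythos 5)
Your proof is correct and follows essentially the same route as the paper: apply the fraction-free LU decomposition to $(A^tA,A^t)$, use the symmetry of $A^tA$ together with the shared pivot diagonal of $L$ and $U_1$ to force the lower-versus-upper triangular comparison that yields $L=U_1^t$, and then read off $R$ and $\Theta$. The only differences are minor: the paper invokes the proof of \cite[Theorem~8]{ZhouJeffrey2008} for $A=\Theta D^{-1}R$ and $\Theta^t\Theta=D$ where you derive them directly from the block identities, and you state explicitly the no-interchange assumption on the leading principal minors of $A^tA$ that the paper leaves implicit.
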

\begin{proof}
We can apply $L U$ factoring, to get
\[(A^tA,A^t) = \hat L D^{-1} (\hat U, \Theta)\ ,\]
where the notation $\hat L,\hat U$ emphasizes that the matrices refer not to a factoring of
$A$, but of $A^tA$.
Since this matrix is symmetric we obtain
\begin{displaymath}
  \hat L D^{-1} \hat U = A^t A = \hat U^t D^{-1} \hat L^t.
\end{displaymath}
Because $A$ has full rank, so do $L$ and $U$ and we can rewrite the
equation as
\begin{displaymath}
  \hat U (\hat L^t)^{-1} D = D \hat L^{-1} \hat U^t.
\end{displaymath}
Examination of the matrices on the left hand side reveals that they
and therefor also their product are all upper triangular. Similarly,
the left hand side is a lower triangular matrix and the equality of
the two implies that they must both be diagonal. Cancelling
$D$ and rearranging the equation yields
\begin{math}
  \hat U = (\hat L^{-1} \hat U^t) \hat L^t
\end{math}
where $\hat L^{-1} \hat U^t$ is diagonal. This shows that the rows of
$\hat U$ are just multiples of the rows of $\hat L^t$. However, we
know that the diagonal entries of $\hat U$ and $\hat L$ are the
same. Thus, $\hat L^{-1} \hat U^t$ is the identity and $\hat L = \hat
U^t$.

We now write $R = \hat L^t = \hat U$. The proof of
\cite[Theorem~8]{ZhouJeffrey2008} shows $A = \Theta D^{-1} R$ and
\begin{math}
  \Theta^t \Theta = \hat U (D \hat L^{-1})^t.
\end{math}
Expanding the last expression and using the definition of $R$ gives
then
\begin{math}
  \Theta^t \Theta = R R^{-1} D = D.
\end{math}
\end{proof}

We now give an explicit expression of the last column of $\Theta$, showing the
common factor of $\det A$.

\begin{theorem}\label{thm:detA}
  With $A \in \D^{n\times n}$ and $\Theta$ as in theorem~\ref{thm:L=U^t},
  we have for all $i=1,\ldots,n$ that
  \begin{displaymath}
    \Theta_{in} = (-1)^{n+i} \det A_{in} \det A
  \end{displaymath}
  where $\det A_{in}$ is the $(i,n)$ minor of $A$.
\end{theorem}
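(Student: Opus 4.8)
The plan is to compute the last column of $\Theta$ from the relation $A = \Theta D^{-1} R$ established in Theorem~\ref{thm:L=U^t}, rewriting it as $\Theta = A R^{-1} D$ and then extracting the last column. Since $R$ is upper triangular, $R^{-1}$ is upper triangular as well, so the last column of $R^{-1}$ has only its bottom entry nonzero, namely $1/R_{nn}$. Hence the last column of $\Theta$ is $\Theta_{\cdot n} = (D_{nn}/R_{nn})\, A_{\cdot n}$, i.e. a scalar multiple of the last \emph{column} of $A$. That alone does not match the claimed formula, which is phrased via minors of $A$, so the real content is to identify the scalar $D_{nn}/R_{nn}$ and to reconcile the two descriptions.

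First I would pin down $R_{nn}$ and $D_{nn}$ in terms of determinants. By the standard theory of fraction-free (Bareiss) LU factoring, when one runs the elimination on $(A^tA, A^t)$, the diagonal pivots of the $LU$ factor of $A^tA$ are the leading principal minors of $A^tA$; concretely $R_{nn} = \det(A^tA)/(\text{the }(n-1)\text{st leading principal minor of }A^tA)$ and $D_{nn}$ is a product of consecutive pivots, so that $D_{nn} = R_{nn}\cdot R_{n-1,n-1}$ (the last two diagonal entries of $R$), matching the shift-by-one structure of $D$ in \eqref{def:LDU}. Therefore $D_{nn}/R_{nn} = R_{n-1,n-1}$, which is the $(n-1)$st leading principal minor of $A^tA$ divided by the $(n-2)$nd. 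Using the Gram-matrix identity $\det(A^tA) = (\det A)^2$ for square $A$, and more generally that every leading principal minor of $A^tA$ equals the sum of squares of the corresponding maximal minors of the first few columns of $A$, I would express this scalar as a ratio of Gram determinants. The cleaner route, which I would actually take, is to observe that $R = \hat U$ is exactly the $R$-factor of the fraction-free Gram--Schmidt process applied to $A$, so $R_{n-1,n-1}$ and $R_{nn}$ have the known closed forms from \cite{EKM:1996,ZhouJeffrey2008} in terms of Gram determinants $G_k = \det(A_{[k]}^t A_{[k]})$ where $A_{[k]}$ is the first $k$ columns: $R_{kk} = G_k/G_{k-1}$ (with $G_0 = 1$). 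Then $D_{nn}/R_{nn} = R_{n-1,n-1} = G_{n-1}/G_{n-2}$.

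Next I would match this against the minor formula. Writing $\Theta_{in} = (G_{n-1}/G_{n-2})\,A_{in}$ and comparing with the claimed $\Theta_{in} = (-1)^{n+i}\det A_{in}\,\det A$, the content reduces to the identity
\begin{equation*}
  (-1)^{n+i}\det A_{in} = \frac{G_{n-1}}{G_{n-2}}\,A_{in}\cdot\frac{1}{\det A}\,,
\end{equation*}
i.e. up to the common scalar $\det A$ one is claiming that the cofactors $(-1)^{n+i}\det A_{in}$ (the entries of $\adj A$ in the last row) are proportional to the entries $A_{in}$ of the last column of $A$. That is \emph{not} true for a general $A$ — so I must be careful: the correct reading is that \emph{both} sides are the $n$th column of $\Theta$, and the theorem is asserting that this column, known to be a multiple of $A_{\cdot n}$, is \emph{also} equal to $\det A$ times the last column of $\adj(A)^t$, which forces a relation that in fact holds because $\Theta^t\Theta = D$ is diagonal: orthogonality of the last column of $\Theta$ to the first $n-1$ columns of $\Theta$ (equivalently, since those span the column space of $A_{[n-1]}$, orthogonality to the first $n-1$ columns of $A$) characterizes the last column of $\Theta$ up to scalar, and the vector $v$ with $v_i = (-1)^{n+i}\det A_{in}$ is exactly (a determinant multiple of) the generalized cross product of the first $n-1$ columns of $A$, hence orthogonal to each of them by the cofactor expansion of a matrix with a repeated column. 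So the clean argument is: (1) $\Theta_{\cdot n}$ is orthogonal to columns $1,\dots,n-1$ of $A$; (2) the vector $v_i = (-1)^{n+i}\det A_{in}$ is also orthogonal to those columns, since $\sum_i v_i A_{ij} = \det(\text{matrix with column }n\text{ replaced by column }j) = 0$ for $j<n$; (3) both lie in a $1$-dimensional orthogonal complement, so $\Theta_{\cdot n} = \lambda v$ for a scalar $\lambda\in\D$ (well-defined by full rank); (4) determine $\lambda = \det A$ by looking at one coordinate, using $\Theta_{nn} = (D_{nn}/R_{nn})A_{nn} = (G_{n-1}/G_{n-2})A_{nn}$ together with $v_n = \det A_{nn} = G_{n-1}/G_{n-2}\cdot(\dots)$ — or more directly by computing the $(n,n)$ entry of $\Theta^t\Theta = D$ two ways, since $\sum_i \Theta_{in}^2 = D_{nn}$ forces $\lambda^2 \sum_i v_i^2 = D_{nn}$ and $\sum_i v_i^2$ is the cofactor expansion giving $\det(A^tA)$-type quantities.

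The main obstacle I anticipate is step (4): showing the scalar is \emph{exactly} $\det A$ (not merely a unit multiple, and not $\pm\det A$ with an ambiguous sign) requires care about the normalization conventions baked into the fraction-free Gram--Schmidt $R$ — specifically the precise value of $R_{nn}$ as produced by the Bareiss-style elimination on $(A^tA,A^t)$, and the exact form of $D_{nn}$. I would resolve this by tracking one explicit entry through the factorization, or by squaring: from $\sum_{i=1}^n \Theta_{in}^2 = D_{nn}$ and $\Theta_{\cdot n} = \lambda v$ we get $\lambda^2 = D_{nn}/\lVert v\rVert^2$, and $\lVert v\rVert^2 = \sum_i (\det A_{in})^2 = \det(A^t A)/G_{n-1}\cdot G_{n-1}$... — here I would invoke the classical identity $\sum_{i}(\det A_{in})^2 = \det(\hat A^t \hat A)$ where $\hat A$ is $A$ with its last column deleted... — wait, that gives $\lVert v \rVert^2 = G_{n-1}$, and $D_{nn} = R_{nn}R_{n-1,n-1} = (G_n/G_{n-1})(G_{n-1}/G_{n-2}) = G_n/G_{n-2}$; with $G_n = \det(A^tA) = (\det A)^2$ this yields $\lambda^2 = (\det A)^2/(G_{n-2}\, G_{n-1})$, which is not a perfect square in general — signalling that my identification $\Theta_{\cdot n} = \lambda v$ with $v$ the bare cofactor vector needs the correct determinant-scaling on $v$, i.e. the honest statement is $\Theta_{\cdot n} = \det A \cdot v$ only after one checks the Bareiss normalization makes $D_{nn}/R_{nn} = \det A \cdot (\text{the proportionality constant between }A_{\cdot n}\text{ and }v)$. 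So concretely the last and trickiest computation is to verify $(D_{nn}/R_{nn})\,A_{in} = (-1)^{n+i}\det A_{in}\,\det A$ by expanding $\det A$ along its last column, $\det A = \sum_i (-1)^{n+i} A_{in}\det A_{in}$, and cross-multiplying — this reduces everything to the single scalar identity $D_{nn}/R_{nn} = (\det A)^2 / \big(\sum_i A_{in}(-1)^{n+i}\det A_{in}\big)\cdot(\text{something})$, i.e. to confirming $D_{nn}/R_{nn}\cdot\big(\sum_i A_{in}^2\big)$-type bookkeeping; I expect this to come out to $D_{nn}/R_{nn} = \det A$ exactly once the pivot formulas are substituted, and that substitution is where I would spend most of the effort.
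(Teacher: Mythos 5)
Your proposal has two concrete errors that prevent it from closing, even though the orthogonality idea in its middle is salvageable. First, the opening computation is wrong: for an upper triangular $R$ it is the last \emph{row} of $R^{-1}$ that reduces to $R_{nn}^{-1}e_n^t$; the last \emph{column} of $R^{-1}$ is in general dense. Hence $\Theta_{\cdot n}=AR^{-1}De_n$ is \emph{not} a scalar multiple of $A_{\cdot n}$ --- the paper's own $4$-by-$4$ example shows this at once, since $(-23374523883001,\dots)^t$ is not proportional to $(-96,56,2,9)^t$ --- and every later formula of the form $\Theta_{in}=(D_{nn}/R_{nn})A_{in}$, including the one you lean on in your step (4), is unusable. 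Second, your pivot formulas are those of the fraction-full Gram--Schmidt process, not of the fraction-free (Bareiss) factorization used here: the diagonal of $R=\hat U$ consists of the leading principal minors of $A^tA$ themselves, $R_{kk}=G_k$ (so $R_{nn}=(\det A)^2$, visible in the example as $2282727441742609=47777897^2$), and $D_{kk}=G_{k-1}G_k$, not $R_{kk}=G_k/G_{k-1}$ and $D_{nn}=G_n/G_{n-2}$. This mis-normalization is exactly why your $\lambda^2$ came out as $(\det A)^2/(G_{n-2}G_{n-1})$, a non-square; you notice the inconsistency but end by deferring the decisive computation, so the scalar $\lambda$ is never determined and the theorem is not proved as written.

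The route you sketch in steps (1)--(3) is sound and could be completed with the correct constants: the cofactor vector $v$ with $v_i=(-1)^{n+i}\det A_{in}$ satisfies $\|v\|^2=G_{n-1}$ by Cauchy--Binet, and $\sum_i\Theta_{in}^2=D_{nn}=G_{n-1}(\det A)^2$ forces $\lambda=\pm\det A$; the sign is then fixed by comparing $v^tA_{\cdot n}=\det A$ (cofactor expansion) with $\Theta_{\cdot n}^t A_{\cdot n}=\hat U_{nn}=(\det A)^2$, which uses the identity $\Theta^tA=\hat U$. The paper avoids all of this bookkeeping with a short direct computation: from $\Theta D^{-1}\hat L^t=A$ one gets $\Theta^tA=D\hat L^{-1}A^tA=\hat U$, hence $\Theta=(\det A)^{-1}(\adj A)^t\hat U^t$; since $\hat U^t$ is lower triangular with $(\det A)^2$ in position $(n,n)$, the last column of $\Theta$ is $\det A$ times the last column of $(\adj A)^t$, which is exactly the claimed cofactor formula.
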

\begin{proof}
  We use the notation from the proof of theorem~\ref{thm:L=U^t}.
  From $\Theta D^{-1} \hat L^t = A$ we obtain
  \begin{displaymath}
    \Theta^t A
    = D \hat L^{-1} A^t A
    = D \hat L^{-1} (\hat L D^{-1} \hat U)
    = \hat U.
  \end{displaymath}
  Thus, since $A$ has full rank, $\Theta^t = \hat U A^{-1}$ or,
  equivalently,
  \begin{displaymath}
    \Theta
    = (\hat U A^{-1})^t
    = (A^{-1})^t \hat U^t
    = (\det A)^{-1} (\operatorname{adj} A)^t \hat U^t
  \end{displaymath}
  where $\operatorname{adj} A$ is the adjugate matrix of $A$. Since
  $\hat U^t$ is a lower triangular matrix with $\det A^t A = (\det
  A)^2$ at position $(n,n)$, the claim follows.
\end{proof}

\begin{theorem}\label{thm:QR.cancel}
Given a square matrix $A$, a reduced fraction-free $QR$ decomposition is
given by $A=\hat\Theta \hat D^{-1}\hat R$, where
$S=\operatorname{diag}(1,1,\ldots,\det A)$ and
$\hat \Theta = \Theta S^{-1}$, and $\hat R=S^{-1}R$.
In addition, $\hat D=S^{-1}DS^{-1}=\hat\Theta^t \hat\Theta$.
\end{theorem}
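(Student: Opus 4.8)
The plan is to derive everything from Theorems~\ref{thm:L=U^t} and~\ref{thm:detA} together with the standard structure of a fraction-free LU decomposition, treating the statement essentially as a bookkeeping computation with commuting diagonal matrices. First I would check the only point that requires an argument rather than algebra, namely that $\hat\Theta$ and $\hat R$ really have entries in $\D$. Right-multiplication by $S^{-1}=\diag(1,\dots,1,(\det A)^{-1})$ merely rescales the last column of $\Theta$, dividing it by $\det A$; by Theorem~\ref{thm:detA} that column has entry $(-1)^{n+i}\det A_{in}\det A$ in row $i$, hence is divisible by $\det A$, so $\hat\Theta=\Theta S^{-1}\in\D^{n\times n}$. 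Symmetrically, $S^{-1}R$ rescales only the last row of $R=\hat U$; since $\hat U$ is upper triangular that row is $(0,\dots,0,\hat U_{nn})$, and $\hat U_{nn}=\det(A^tA)=(\det A)^2$ (the fact already recorded in the proof of Theorem~\ref{thm:detA}, and the usual property of fraction-free LU that the bottom-right $U$-entry is the determinant of the decomposed matrix). Thus the last row of $\hat R$ is $(0,\dots,0,\det A)$ and $\hat R=S^{-1}R\in\D^{n\times n}$.

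With integrality in hand, the identities follow by commuting diagonal matrices. Since $S$, $D$ and their inverses are all diagonal, hence symmetric and mutually commuting, I would write $\hat D^{-1}=SD^{-1}S$ and then use $A=\Theta D^{-1}R$ from Theorem~\ref{thm:L=U^t} to get
\[
  \hat\Theta\,\hat D^{-1}\hat R=\Theta S^{-1}\cdot SD^{-1}S\cdot S^{-1}R=\Theta D^{-1}R=A .
\]
For the last assertion, $\hat\Theta^t\hat\Theta=(S^{-1})^t\,\Theta^t\Theta\,S^{-1}=S^{-1}DS^{-1}=\hat D$, where $\Theta^t\Theta=D$ is again Theorem~\ref{thm:L=U^t} and $S^{-1}=(S^{-1})^t$; in particular this also shows $\hat D\in\D^{n\times n}$, so no separate check on $\hat D$ is needed. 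To justify the word ``reduced'' I would finally remark that each entry of the last column of $\hat\Theta$ is now a single minor of $A$, a factor of $\det A$ smaller than the corresponding entry of $\Theta$ (Theorem~\ref{thm:detA}), which is exactly the size reduction announced before the theorem.

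I do not anticipate a real obstacle: the only step with content is the divisibility of the last row of $R$ by $\det A$, and that is immediate from $\hat U_{nn}=(\det A)^2$. The one hypothesis I would flag explicitly is that $A$ must be of full rank, as in Theorems~\ref{thm:L=U^t} and~\ref{thm:detA}, so that $\det A$ is a unit-free nonzero element and $S^{-1}$ — and hence the whole statement — makes sense.
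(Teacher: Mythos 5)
Your proof is correct and follows essentially the same route as the paper: exactness of $\Theta S^{-1}$ via Theorem~\ref{thm:detA}, then the identity $A=\Theta S^{-1}\,SD^{-1}S\,S^{-1}R$ with the diagonal matrices commuting. The only difference is that you additionally verify that $S^{-1}R$ is fraction-free via $\hat U_{nn}=\det(A^tA)=(\det A)^2$ (and flag the full-rank hypothesis), a detail the paper's two-line proof leaves implicit.
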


\begin{proof}
By theorem \ref{thm:detA}, $\Theta S^{-1}$ is an exact division. The theorem follows from
$A=\Theta S^{-1} S D^{-1} S S^{-1} R$.
\end{proof}

As an example we consider the $4$-by-$4$ integer matrix
\begin{displaymath}
  A = \begin{pmatrix}
    -62 & 21  & 64  & -96 \\
    38  & 18  & 31  & 56  \\
    -59 & -86 & 19  & 2   \\
    40  & -91 & -62 & 9
  \end{pmatrix}.
\end{displaymath}
Computing the $Q R$ decomposition with theorem~\ref{thm:L=U^t} yields
\begin{displaymath}
  \Theta =
  \begin{pmatrix}
    -62 & 268341  & 2658137038 & -23374523883001 \\
    38  & 155634  & 8243861790 & 3112061098992   \\
    -59 & -843590 & 2460946816 & 14218033256642  \\
    40  & -976219 & -81659738  & -18215371009147
  \end{pmatrix}
\end{displaymath}
$D = \operatorname{diag}(10369, 1760876458298, 81089877269400184044,$\\
\phantom{D=diag(>> }$ 1090005501728694354954965838)$\ , \\
$R=$
\begin{displaymath}
  \begin{pmatrix}
    10369 & 816       & -6391        & 8322         \\
    0     & 169821242 & 66495846     & -27518383    \\
    0     & 0         & 477501379182 & 210662060582 \\
    0     & 0         & 0            & 2282727441742609
  \end{pmatrix}.
\end{displaymath}
We can now check that indeed $\det A = 47777897$ divides the last
column of $\Theta$.

Cancelling $\det A$ from the last column of $\Theta$ and the last
entry of $R$ as well as reducing $D$ accordingly leads to the much
simpler output
\begin{displaymath}
  \Theta =
  \begin{pmatrix}
    -62 & 268341  & 2658137038 & -489233 \\
    38  & 155634  & 8243861790 & 65136   \\
    -59 & -843590 & 2460946816 & 297586  \\
    40  & -976219 & -81659738  & -381251
  \end{pmatrix},
\end{displaymath}
$D = \operatorname{diag}(10369, 1760876458298, 81089877269400184044,$\\
\phantom{D=diag(>> }$477501379182)$ and
\begin{displaymath}
  R =
  \begin{pmatrix}
    10369 & 816       & -6391        & 8322         \\
    0     & 169821242 & 66495846     & -27518383    \\
    0     & 0         & 477501379182 & 210662060582 \\
    0     & 0         & 0            & 47777897
  \end{pmatrix}
\end{displaymath}

\section{Common factors in LU}
Given a matrix $A$ over an integral domain $\D$, we consider the fraction-free
decomposition $A=LD^{-1}U$.
It is clear that if the elements in a column of $L$ or a row of $U$
possess a common GCD, then that factor can be removed,
reducing the size of the matrix elements.
We identify 3 sources of common GCDs.

\subsection{Input data}
The initial matrix may contain one or more rows having a common GCD,
usually because of modelling choices made by the user.
Standard Gaussian elimination will then transfer the common factor into all
subsequent rows. If several rows have different GCDs, then all GCDs accumulate
in subsequent rows.

\subsection{ LU and the Smith Form}

The following theorem links the Smith normal form of a given matrix
with factors appearing in the LU decomposition.
\begin{theorem}\label{thm:smith}
  Let $A \in \Mat{\D}nn$ have the Smith normal form $S =
  \diag(d_1,\ldots,d_n)$ where $d_1,\ldots,d_n \in \D$. Moreover, let
  $A = L D^{-1} U$ be an $L D^{-1} U$ decomposition of $A$. Then for
  $k=1,\ldots,n$
  \begin{equation*}
    d_k^* = \prod_{j=1}^k d_j \mid U_{k,*}
    \quad\text{and}\quad
    d_k^* \mid L_{*,k}.
  \end{equation*}
\end{theorem}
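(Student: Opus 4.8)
The plan is to reduce the claim to the classical description of the Smith normal form by determinantal divisors together with the fact that fraction-free elimination produces minors of $A$ as its intermediate quantities. Recall first that $d_k^{*}=d_1\cdots d_k$ is precisely the $k$-th determinantal divisor of $A$, i.e.\ the gcd of all $k\times k$ minors of $A$: from a unimodular Smith decomposition $A=XSY$ and the Cauchy--Binet formula, every $k\times k$ minor of $A$ is a $\D$-linear combination of $k$-fold products of the $d_i$ and hence a multiple of $d_1\cdots d_k$, while $d_1\cdots d_k$ itself is a $k\times k$ minor of $S$ and so, reversing the argument, a $\D$-linear combination of $k\times k$ minors of $A$. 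It therefore suffices to show that every entry of the $k$-th row of $U$ and of the $k$-th column of $L$ is, up to sign, a $k\times k$ minor of $A$.

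To establish this, fix $j\ge k$ and consider the $k\times k$ submatrix of $A$ with rows $\{1,\dots,k\}$ and columns $\{1,\dots,k-1,j\}$. Since $L$ is lower triangular its first $k$ rows are supported on the first $k$ columns, and since $U$ is upper triangular the $k$-th row of its leading $k\times(k-1)$ block vanishes; expanding $A=LD^{-1}U$ on this submatrix thus writes the chosen minor of $A$ as $\bigl(\prod_{i\le k}L_{ii}\bigr)\bigl(\prod_{i\le k}D_{ii}\bigr)^{-1}\bigl(\prod_{i\le k-1}U_{ii}\bigr)U_{kj}$. Substituting the known form of the diagonal of $D$ from \cite{Jeffrey2010LU} (products of consecutive pivots) and using that $L$ and $U$ share the same diagonal, all of these diagonal factors cancel and one is left with $U_{kj}=\pm\det A_{\{1,\dots,k\},\{1,\dots,k-1,j\}}$; the transposed computation gives $L_{ik}=\pm\det A_{\{1,\dots,k-1,i\},\{1,\dots,k\}}$ for $i\ge k$. (One could instead simply quote the Sylvester/Bareiss minor identity for the entries arising in fraction-free elimination.)

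Combining the two facts proves the theorem: for $j\ge k$ the entry $U_{kj}$ equals, up to sign, a $k\times k$ minor of $A$ and is hence divisible by the gcd $d_k^{*}$ of all such minors, while $U_{kj}=0$ for $j<k$; the same holds for the entries of $L_{*,k}$. If a pivoting variant $A=P_w L D^{-1} U P_c$ is used, one runs the argument for $P_w^{t}AP_c^{t}$ instead --- its $k\times k$ minors agree up to sign with those of $A$, and its Smith form is unchanged --- so the conclusion is the same.

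I expect the only genuine effort to lie in the minor identity of the second paragraph: keeping the row and column index sets straight and checking that the diagonal factors cancel against the precise form of $D$. The determinantal-divisor input and the final assembly are standard.
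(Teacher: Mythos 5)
Your proof is correct and follows essentially the same route as the paper: reduce to the fact that $d_k^*=d_1\cdots d_k$ is the gcd of all $k\times k$ minors of $A$, and then show that the entries of the $k$th row of $U$ and the $k$th column of $L$ are themselves $k\times k$ minors. The only difference is that where the paper cites Newman for the determinantal-divisor characterization and Geddes--Czapor--Labahn (the Sylvester/Bareiss identity) for the minor property, you derive both directly --- via Cauchy--Binet and by expanding the chosen minor through $A=LD^{-1}U$ using the known diagonal structure of $L$, $D$, $U$ from the fraction-free decomposition --- and you additionally make the permutation case explicit.
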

\begin{remark}
  The values $d_1^*, \ldots, d_n^*$ are known as the
  \emph{determinantal divisors} of $A$.
\end{remark}
\begin{proof}
  According to \cite[II.15]{Newman:1972}, the diagonal entries of the
  Smith form are quotients of the determinantal divisors, i.\,e.,
  $d_1^* = d_1$ and $d_k = d^*_k/d^*_{k-1}$ for
  $k=2,\ldots,n$. Moreover, $d_k^*$ is the greatest common divisor of
  all $k$-by-$k$ minors of $A$ for each $k=1,\ldots,n$. Thus, we only
  have to prove that the entries of the $k$\ordinal\ row of $U$ are
  $k$-by-$k$ minors of $A$. However, this follows from
  \cite[Eqns~(9.8), (9.12)]{GeddesCzaporLabahn:1992}, since the
  $k$\ordinal\ row of $U$ are just
  \begin{displaymath}
    \det
    \begin{pmatrix}
      A_{11} & \cdots & A_{1k} & A_{1j} \\
      \vdots &  & \vdots & \vdots \\
      A_{k1} & \cdots & A_{kk} & A_{kj} \\
    \end{pmatrix}
    \quad\text{where}\quad j=1,\ldots,k.
  \end{displaymath}

  Similarly, following the algorithm in \cite{Jeffrey2010LU}, we see
  that the columns of $L$ are just made up by copying entries from the
  columns of $U$ during the reduction. More precisely, the
  $k$\ordinal\ column of $L$ will have the entries $a_{1k}^{(k-1)},
  \ldots, a_{nk}^{(k-1)}$ (using the notation of
  \cite{GeddesCzaporLabahn:1992}). But these are again just $k$-by-$k$
  minors of $A$.
\end{proof}

We give an example using the domain $\Q[x]$.
Let $A$ be the polynomial matrix
\begin{displaymath}
  \begin{pmatrix}
     -\frac32 & -x^3 + 5 x^2 + 3 x - \frac92 & x^2 + x     & \frac12 x^3 - x^2  \\[1ex]
     -3       & -2 x^3 + 10 x^2 + 5 x - 9   & 2 x^2 + 2 x & x^3 - 2 x^2         \\[1ex]
     \frac12  & x^3 + \frac32               & 0           & -\frac12 x^3        \\[1ex]
     -\frac12 & -x - \frac32                & 0           & \frac12 x
  \end{pmatrix}.
\end{displaymath}
The Smith normal form $S$ of $A$ is
\begin{displaymath}
  \diag(1, x, x(x+1), x(x+1)(x-1))
\end{displaymath}
and thus its determinantal divisors are $d_1^* = 1$, $d_2^* = x$,
$d_3^* = x^2(x+1)$ and $d_4^* = x^3(x+1)^2(x-1)$. Computing the $L
D^{-1} U$ decomposition of $A$ yields $A = L D^{-1} U$ where $L$ is
\begin{displaymath}
  \scriptsize
  \begin{pmatrix}
    -\frac32 & 0 & 0 & 0 \\[1ex]
    -3 & \frac32 x & 0 & 0 \\[1ex]
    \frac12 & -x^3  - \frac52 x^2  - \frac32 x & \frac12 x^3  + \frac12 x^2  & 0 \\[1ex]
    -\frac12 & -\frac12 x^3  + \frac52 x^2  + 3 x & -\frac12 x^3  - \frac12 x^2 &
    -\frac14 x^6  - \frac14 x^5  + \frac14 x^4  + \frac14 x^3
  \end{pmatrix},
\end{displaymath}
$D = \diag(-3/2, -9/4 x, 3/4 x^4 + 3/4 x^3, -1/8 x^9 - 1/4 x^8 + 1/4
x^6 + 1/8 x^5)$, $U$ is
\begin{displaymath}
  \scriptsize
  \begin{pmatrix}
    -\frac32 & -x^3  + 5 x^2  + 3 x - \frac92 & x^2  + x & \frac12 x^3  - x^2 \\[1ex]
    0 & \frac32 x & 0 & 0 \\[1ex]
    0 & 0 & \frac12 x^3 + \frac12 x^2 & -\frac12 x^4 - \frac12 x^3 \\[1ex]
    0 & 0 & 0 & -\frac14 x^6 - \frac14 x^5 + \frac14 x^4 + \frac14 x^3
  \end{pmatrix}.
\end{displaymath}
Computing the column factors of $L$ and the row factors of $U$ yields
$1$, $x$, $x^2(x+1)$ and $x^3(x-1)(x+1)^2$, i.\,e., exactly the
determinantal divisors. In general, there could be other factors as
well.

\subsection{Statistical effects}

Suppose that during Bareiss's algorithm after $k-1$ iterations we have
reached the following state
\begin{displaymath}
  A^{(k-1)} =
  \begin{pmatrix}
    U & \underline{*} & \underline{*} & \mathbf{*} \\
    \overline{0} & p & * & \overline{*} \\
     0 & 0 & a & \overline{v} \\
     0 & 0 & b & \overline{w} \\
     \mathbf{0} & \underline{0} & \underline{*} & \mathbf{*}
  \end{pmatrix}\ ,
\end{displaymath}
where $U$ is an upper triangular matrix, $p,a,b \in \D$,
$\overline{v}, \overline{w} \in \D^{1\times n-k-1}$ and the other
overlined quantities are row vectors and the underlined quantities are
column vectors. Assume that $a \neq 0$ and that we choose it as a
pivot. Continuing the computations we now eliminate $b$ (and the
entries below) by cross-multiplication
\begin{displaymath}
  A^{(k-1)} \leadsto
  \begin{pmatrix}
    U & \underline{*} & \underline{*} & \mathbf{*} \\
    \overline{0} & p & * & \overline{*} \\
     0 & 0 & a & \overline{v} \\
     0 & 0 & 0 & a\overline{w} - b\overline{v} \\
     \mathbf{0} & \underline{0} & \underline{0} & \mathbf{*}
  \end{pmatrix}.
\end{displaymath}
Here, we can see that any common factor of $a$ and $b$ will be a
factor of every entry in that row, i.\,e., $\gcd(a,b) \mid
a\overline{w} - b\overline{v}$. However, we still have to carry out
the exact division step. This leads to
\begin{displaymath}
  A^{(k-1)} \leadsto
  \begin{pmatrix}
    U & \underline{*} & \underline{*} & \mathbf{*} \\
    \overline{0} & p & * & \overline{*} \\
     0 & 0 & a & \overline{v} \\
     0 & 0 & 0 & \frac1p(a\overline{w} - b\overline{v}) \\
     \mathbf{0} & \underline{0} & \underline{0} & \mathbf{*}
  \end{pmatrix} = A^{(k)}.
\end{displaymath}
The division by $p$ is exact. Some of the factors in $p$ might be
factors of $a$ or $b$ while others are hidden in $\overline{v}$ or
$\overline{w}$. However, every common factor of $a$ and $b$ which is
not also a factor of $p$ will still be a common factor of the
resulting row. In other words,
\begin{displaymath}
  \Bigl.
  \frac{\gcd(a,b)}{\gcd(a,b,p)}
  \;\Bigm|\;
  \frac1p(a\overline{w} - b\overline{v})
  \Bigr..
\end{displaymath}

In fact, the factors do not need to be tracked during the $L D^{-1} U$
reduction but can be computed afterwards: All the necessary entries
$a$, $b$ and $p$ of $A^{(k-1)}$ will end up as entries of $L$. More
precisely, we will have $p = L_{k-1,k-1}$, $a = L_{k,k}$ and $b =
L_{k+1,k}$.

If $\D$ are the integers, then the probability that the quotient
$\gcd(a,b)/\gcd(p,a,b) \neq 1$, i.e. nontrivial, for random $a,b,p$ equals 
$1 - 6\zeta(3)/\pi^2 \approx 26.92\%$ \cite{Hare, Winterhof}. 
Thus, for
integer matrices these factors occur with a high enough frequency to
suggest we care about them. 
In our experiments we saw that independently of the
size of the input matrix this method could detect about $40.17\%$ of
all the common prime row factors occurring in $U$.\footnote{This
  experiment was carried out with random square matrices $A$ of sizes
  between $5$-by-$5$ and $125$-by-$125$. We decomposed $A$ into $P_w L
  D^{-1} U P_c$ and then computed the number of predicted prime
  factors in $U$ and related that to the number of actual prime
  factors. We did not consider the last row of $U$ since this contains
  only the determinant.}

As an example consider the matrix
\begin{displaymath}
  A = \begin{pmatrix}
    0   & -18 & -92 & -25 & -60 \\
    49  & -77 & 66  & 45  & 8   \\
    18  & 31  & 69  & -81 & 51  \\
    -58 & 41  & 22  & 37  & -97 \\
    -77 & -52 & 48  & -19 & -10
  \end{pmatrix}.
\end{displaymath}
This matrix has a $L D^{-1} U$ decomposition with
\begin{displaymath}
  L =
  \begin{pmatrix}
    8   &     0   &       0    &        0    &          0   \\
    -10 &    -126 &         0  &          0  &            0\\
    51  &  -2355  &   134076   &         0   &           0\\
    -97 &    4289 &   -233176  &  -28490930  &            0\\
    -60 &    2940 &   -148890  &  -53377713  &  11988124645
  \end{pmatrix}
\end{displaymath}
and
\begin{displaymath}
  U =
  \begin{pmatrix}
    8  &    49 &       45 &         -77 &            66 \\
    0  &  -126 &      298 &       -1186 &          1044 \\
    0  &     0 &   134076 &     -414885 &        351648 \\
    0  &     0 &        0 &   -28490930 &      55072620 \\
    0  &     0 &        0 &           0 &   11988124645
  \end{pmatrix}.
\end{displaymath}
The method outlined above correctly predicts the common factor $2$ in
the second row, the factor $3$ in the third row and the factor $2$ in
the fourth row. However, it does not detect the additional factor $5$
in the fourth row.

\medskip

There is another way in which common factors in integer matrices can
arise: Let $d$ be any number. Then for random $a,b$ the probability
that $d \mid a+b$ is $1/d$. That means that if $v,w \in \Z^{1\times
  n}$ are vectors, then $d \mid v + w$ with a probability of
$1/d^n$. This effect is noticable in particular for small numbers like
$d = 2,3$ and in the last iterations of the $L D^{-1} U$ decomposition
when the number of non-zero entries in the rows has shrunk. For
instance, in the second last iterations we only have three rows with
at most three non-zero entries each. Moreover, we know that the first
non-zero entries of the rows cancel during cross-multiplication. Thus,
a factor of $2$ appears with a probability of $25\%$ in one of those
rows, a factor of $3$ with a probability of $11.11\%$. In the example
above, the probability for the factor $5$ to appear in the fourth row
was $4\%$.

\medskip

In a manner similar to theorem~\ref{thm:QR.cancel}, we can cancel all factors which we
find from the final output:

\begin{theorem}\label{thm:LU.cancel}
  Given a matrix $A \in \Mat{\D}mn$ with rank $r$ and its
  decomposition $A = P_w L D^{-1} U P_c$, if $D_U =
  \diag(d_1,\ldots,d_r)$ is a diagonal matrix with $d_k \mid
  \gcd(U_{k,*})$, then setting $\hat U = D_U^{-1} U$ and $\hat D = D
  D_U^{-1}$ where both matrices are fraction-free we have the
  decomposition $A = P_w L \hat D^{-1} \hat U P_c$.
\end{theorem}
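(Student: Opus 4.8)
The plan is to split the claim into three verifications: that $\hat U$ is a matrix over $\D$, that $\hat D$ is a matrix over $\D$, and that $P_w L \hat D^{-1} \hat U P_c$ recovers $A$. Since $D$, $D_U$ and $\hat D$ are all diagonal and hence commute, the third point is essentially free: writing $\hat D^{-1} = D_U D^{-1}$ we get
\[
  P_w L \hat D^{-1} \hat U P_c = P_w L\, D_U D^{-1} D_U^{-1}\, U P_c = P_w L D^{-1} U P_c = A,
\]
so the real content lies in the two fraction-free claims.

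For $\hat U = D_U^{-1} U$: left multiplication by $D_U^{-1} = \diag(d_1^{-1},\ldots,d_r^{-1})$ just divides the $k$\ordinal\ row of $U$ by $d_k$, and since $d_k \mid \gcd(U_{k,*})$ every entry of that row is divisible by $d_k$, so the quotient stays in $\D$. For $\hat D = D D_U^{-1}$ I would appeal to the structure of the $L D^{-1} U$ decomposition from \cite{Jeffrey2010LU} (also used in the proof of Theorem~\ref{thm:smith} and visible in the examples above): the diagonals of $L$ and $U$ carry the successive Bareiss pivots $p_1,\ldots,p_r$, so $U_{kk} = p_k$, while $D_{kk} = p_{k-1}p_k$ with the convention $p_0 = 1$. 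In particular $U_{kk} \mid D_{kk}$, and since $d_k \mid \gcd(U_{k,*}) \mid U_{kk}$ we get $d_k \mid U_{kk} \mid D_{kk}$, so $D_{kk}/d_k \in \D$ and $\hat D$ is fraction-free.

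I do not expect a genuine obstacle here; the only point needing care is the fraction-freeness of $\hat D$, which rests on the precise normalization $D_{kk} = p_{k-1}p_k$ relating the entries of $D$ to the pivots — a structural fact about this particular $L D^{-1} U$ form rather than anything deep. If one wished to avoid quoting that structure explicitly, it suffices to note that $d_k$ divides the pivot $U_{kk}$, which enters $D_{kk}$ as a factor by the very construction of the Bareiss recursion; either way the divisibility $d_k \mid D_{kk}$ is immediate, and the theorem follows.
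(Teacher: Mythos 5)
Your proposal is correct and follows essentially the same route as the paper: both arguments rest on the fact (from \cite[Theorem~2]{Jeffrey2010LU}) that the diagonal entries of $U$ are the pivots and divide the corresponding diagonal entries of $D$, so $d_k \mid U_{kk} \mid D_{kk}$ makes $\hat U$ and $\hat D$ fraction-free, and the identity $A = P_w L D^{-1} D_U D_U^{-1} U P_c$ finishes the proof. Your explicit normalization $D_{kk} = p_{k-1}p_k$ is just a more detailed statement of the divisibility the paper quotes.
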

\begin{proof}
  By \cite[Theorem~2]{Jeffrey2010LU} the diagonal entries of $U$ are the
  pivots chosen during the decomposition and they also divide the
  diagonal entries of $D$. Thus, any common divisor of $U_{k,*}$ will
  also divide $D_{kk}$ and therefor both $\hat U$ and $\hat D$ are
  fraction-free. We can easily check that $A = P_w L D^{-1} D_U
  D_U^{-1} U = P_w L \hat D^{-1} \hat U P_c$.
\end{proof}

\begin{remark}
  If we can find common column factors of $L$ we can cancel them in
  the same way. However, if we have already cancelled factors from $U$, 
  then there is no guarantee that $d \mid L_{*,k}$ implies $d
  \mid \hat D_{kk}$. Thus, in general we can only cancel $\gcd(d,
  \hat D_{kk})$ from $L_{*,k}$.
\end{remark}

\section{Pivoting strategies for LU}
Our pivoting strategies are all based on full pivoting, which is
already implied by the definition of the form.  We define a number of
pivoting strategies.
\begin{description}
\item[Largest] We select the largest pivot according to an appropriate
  metric.  Metrics were the absolute value for integer matrices and
  the degree as well as the height for matrices univariate
  polynomials.
\item[Smallest] Here we select the smallest pivot according
  to the same metrics as above.
\item[First] We select the first non-zero pivot.
\item[Factors] With this strategy we select the pivot which has the
  least number of prime factors counted with multiplicity.
\end{description}
Of course, the ``factors'' strategy is not viable in practice since
the factorisation is much too costly. However, it does provide
interesting theoretical insight.

In contrast to floating point calculations, accuracy of the result is
not an issue, and we consider instead the size of the elements in the
matrices generated, and any impact on the efficiency of the
computation. By size we examine the following
\begin{description}
\item[Digits] For integer matrices or matrices we count the total
  number of base-$10$ digits needed to represent it. We also use this
  measurement for matrices with rational number entries where we
  simply add up the digits of the numerators and the denominators.
\item[Terms] For univariate polynomial matrices we count the total
  number of non-zero terms in the fully expanded representation of the
  entries.
\item[Height] As another metric for polynomial matrices we use the
  maximal height of its entries.
\item[Factors] For both integer and polynomial matrices we measure the
  total number of row factors. Here, we compute the greatest common
  divisor of each row and count the number of prime factors with
  multiplicity. The number of factors for ech row is then added up.
\end{description}
Note that the measured quantities do solely depend on the output. In
particular do they not depend on how the programme handles its memory
during the computations. Also note that the measurements are chosen in
such a way that they are independent of the internal representation of
the data. For instance, every programme has to store all the digits of
the output matrices somehow.

%

The experiments included in this paper were all carried out with
\Ma. We use our own implementation of the $L D^{-1} U$ decomposition
which closely follows \cite{Jeffrey2010LU}. For each experiment we
generated random matrices $A$ of different sizes and then performed
the decomposition $A = P_w L D^{-1} U P_c$ using the strategies
described above. That is, each random matrix $A$ was decomposed with
each of the strategies. We then applied the applicable
measurements. In the end we computed the mean value of all the
results. More precise description of the experiments follow below.

For table~\ref{tab:pivoting.int} we generated three hundred integer
matrices for each size. The entries where between $-11^3$ and
$11^3$. Also in order to be closer to real world problems, we made
sure that the sizes of the entries in our matrices varied widely with
less than $25\%$ of the entries reaching maximal
size. Table~\ref{tab:pivoting.int} shows the number of digits and the
number of row factors of $U$ where the decompositions are done using
the ``smallest'', ``largest'' and ``factors'' strategies described at
the beginning of this section.

\begin{table}
  \centering
  \small
  \begin{tabular}{r|*{3}{r}|*{3}{r}}
    $n$ & \multicolumn{3}{c}{digits} & \multicolumn{3}{c}{row factors} \\
    & smallest & largest & factors
    & smallest & largest & factors
    \\\hline
    5  & 78.13   & 101.74  & 85.13   & 7.58  & 8.01  & 5.74 \\
    10 & 503.72  & 678.40  & 569.40  & 11.65 & 12.80 & 6.44 \\
    15 & 1625.08 & 2130.83 & 1833.94 & 17.17 & 17.95 & 7.77 \\
    20 & 3832.33 & 4888.83 & 4297.05 & 21.38 & 22.88 & 7.98 \\
    25 & 7533.28 & 9365.39 & 8316.27 & 26.06 & 27.92 & 8.26
  \end{tabular}
  \normalsize
  \caption{Output sizes for different pivoting strategies for integer matrices.
    \normalfont
    The table compares the average number of digits and the number of
    row common factors of $U$ for random $n$-by-$n$ integer matrices $A$
    as input using the ``smallest'', ``largest'' and ``factors'' pivoting
    strategies.}
  \label{tab:pivoting.int}
\end{table}

Table~\ref{tab:pivoting.polynom} shows a similar experiment for
matrices of univariate polynomials. We compare the strategies of
choosing the pivot with the smallest degree versus choosing the
largest degree and choosing the smallest height. The matrices $A$
contained random polynomials with integer coefficients between $-100$
and $100$ and degree at most $3$. During the same experiment we also
measured the number of row factors and the height of $U$ but we did
not find a significant difference between the different strategies.

\begin{table}
  \centering
  \begin{tabular}{r|*{3}{r}}
    $n$ & smallest degree & largest degree & height  \\\hline
    5   & 83.07           & 106.80         & 91.91   \\
    10  & 532.45          & 698.15         & 609.13  \\
    15  & 1696.09         & 2154.53        & 1946.16 \\
    20  & 3932.09         & 4860.95        & 4504.71
  \end{tabular}
  \caption{Output sizes for different pivoting strategies for polynomial
    matrices. \normalfont
    The table compares the number of terms of $U$ for random $n$-by-$n$
    input matrices $A$ using the  ``smallest degree'', ``largest degree''
    and ``smallest height'' pivoting strategies.}
  \label{tab:pivoting.polynom}
\end{table}

\section{Solving}


In this section we detail a method for solving linear systems in such
a way that fractions are delayed until the final output.

Let $A \in \Mat{\D}mn$ and $b \in \D^m$. We wish to solve the system
$A x = b$, seeking solutions $x$ with entries in the field of
fractions of $\D$. First, apply the $L D^{-1} U$ decomposition as in
\cite{Jeffrey2010LU}. We obtain
\begin{displaymath}
  D L^{-1} P_w^t \, A
  = \begin{pmatrix}
    V \\
    W
  \end{pmatrix}
  A
  = \begin{pmatrix}
    U & B \\
    0 & 0 \\
  \end{pmatrix}
  P_c
  \quad\text{and}{\quad}
  P_c x =
  \begin{pmatrix}
    y \\
    z
  \end{pmatrix},
\end{displaymath}
where all (sub) matrices have entries in $\D$, $U$ is an $r$-by-$r$,
regular and upper triangular matrix, $r$ is the rank of $A$ and where
$y$ has dimension $r$.
Then $A x = b$ if and only if $W b = 0$ and $U
y + B z = V b$.

Now, perform a second $L D^{-1} U$ decomposition on $U$
(pivoting is not needed as all diagonal entries of $U$ are non-zero),
working from the bottom to the top, and from right to
left\footnote{More formally, let $\Pi$ be the matrix of the
  permutation which maps $i$ to $r+1-i$ and decompose $\Pi U \Pi$ in
  the normal way applying the same permutations to the result.}. This
will compute a regular $X \in \Mat{\D}rr$ such that $X U = \Delta$ is
a diagonal matrix. Then $A x = b$ if and only if $W b = 0$ and $\Delta
y + X B z = X V b$.

Assume now that the compatibility condition $W b = 0$ is fulfilled. In
order to compute a particular solution $x_0$ of the system $A x = b$,
we can simply choose
\begin{displaymath}
  x_0 = P_c^{-1}
  \begin{pmatrix}
    \Delta^{-1} X V b \\
    0
  \end{pmatrix}
  =
  \tilde\Delta^{-1} S b
  \quad\text{where}\quad
  S = P_c
  \begin{pmatrix}
    X V\\
    0
  \end{pmatrix}
\end{displaymath}
and where $\tilde\Delta = P_c \diag(\Delta,\ID) P_c$ is a diagonal
matrix with entries in $\D$.

Moreover, we can compute the nullspace of $A$ in the following way: If
\begin{displaymath}
  x \in \colspace
  P_c^{-1} \begin{pmatrix}
    -\Delta^{-1} \, X B \\
    \ID_{n-r}
  \end{pmatrix}
  ,
\end{displaymath}
then we can easily check that $A x = 0$. Since the $n-r$ columns of
the matrix spanning the space are clearly linearly independent, it
follows that this is already the entire nullspace of $A$. Thus,
setting
\begin{displaymath}
  K = P_c
  \begin{pmatrix}
    -X B \\ \ID
   \end{pmatrix},
\end{displaymath}
we see the nullspace of $A$ is $\colspace \tilde\Delta^{-1} K$, with
$\tilde\Delta$ as defined above.

Note that $S$ and $K$ are both matrices over $\D$. Thus, the
particular solution and the nullspace are both computed in a
fraction-free way. Moreover, neither of the matrices depends on the
right hand side $b$. Consequently, after computing $W$, $S$,
$\tilde\Delta$ and $K$, we can solve the system $A x = b$ for
arbitrary $b$ by just checking whether $W b = 0$ and then computing
$x_0 = \tilde\Delta^{-1} S b$.

\medskip

We summarise the method as follows:

\begin{algorithm}\label{alg:solve}
  \begin{description}
  \item[Input:] A matrix $A \in \Mat{\D}mn$.
  \item[Output:] Matrices $W$, $S$, and $K$ with entries in $\D$ and a
    diagonal matrix $\tilde\Delta$ with entries in $\D$ such that for
    any $b \in \D^m$ if the compatibility condition $W b = 0$ is met,
    then the system $A x = b$ has the solution set $\tilde\Delta^{-1}
    S b + \colspace \tilde\Delta^{-1} K$.
  \item[Steps:]
    \begin{enumerate}
    \item Apply the $L D^{-1} U$ decomposition to obtain
      \begin{displaymath}
        D L^{-1} P_w^t A
        =
        \begin{pmatrix}
          V \\ W
        \end{pmatrix}
        A
        =
        \begin{pmatrix}
          U & B \\
          0 & 0
        \end{pmatrix}
      \end{displaymath}
      where $U$ is upper triangular.
    \item Use a backwards $L D^{-1} U$ decomposition on $U$ to obtain
      a matrix $X$ such that diagonal $X U = \Delta$ is a diagonal
      matrix.
    \item Let
      \begin{displaymath}
        S = P_c
        \begin{pmatrix}
          X V \\ 0
        \end{pmatrix},
        \quad
        K = P_c
        \begin{pmatrix}
          -X B \\ \ID
        \end{pmatrix}
      \end{displaymath}
      and $\tilde\Delta = P_c \diag(\Delta,\ID) P_c$.
    \end{enumerate}
  \end{description}
\end{algorithm}

\medskip

As an example we consider the matrix
\begin{displaymath}
  A = \begin{pmatrix}
    -370  & -62  & -101 & -3 \\
    -708  & -120 & -193 & -5 \\
    -304  & -50  & -83  & -3 \\
    -1962 & -336 & -534 & -12
  \end{pmatrix}
\end{displaymath}
and examine the two systems below for solutions.
\begin{displaymath}
  A x =
  \begin{pmatrix}
    1 \\ 0 \\ 0 \\ 1
  \end{pmatrix}
  = b_1
  \quad\text{and}\quad
  A x =
  \begin{pmatrix}
    0 \\ 0 \\ 1 \\ 1
  \end{pmatrix}
  = b_2.
\end{displaymath}

Following algorithm~\ref{alg:solve}, we first compute
\begin{displaymath}
  \begin{pmatrix}
      1   &   0   &   0   & 0       \\
      3   &   0   &  -3   & 0     \\
      110 &   -36 &   -50 &   0 \\\hline
      7   &  -6   &   -1  &  1
  \end{pmatrix}
  A =
  \left(\begin{array}{ccc|c}
      -3 &   -62  &  -101 &   -370\\
      0  &  -36   &  -54  &  -198\\
      0  &    0   &  -12  &   -12 \\\hline
      0  &     0  &    0  &     0
  \end{array}\right) P_c
\end{displaymath}
where $P_c$ represents the permutation $(1\quad 4)$; and use this to
define the matrices $V$, $W$, $U$ and $B$. Next, we compute
\begin{displaymath}
  X =
  \begin{pmatrix}
    432 &   -744 &   -288 \\
    0   &  -12   &   54 \\
    0   &    0   &    1
  \end{pmatrix}
\end{displaymath}
and $X U = \diag(-1296, 432, -12) = \Delta$. This leads to
\begin{displaymath}
  S =
  \begin{pmatrix}
    0     &   0     &   0      & 0 \\
    5904  &  -1944  &  -2664   &  0 \\
    110   &   -36   &   -50    &  0 \\
    -33480&    10368&    16632 &   0
  \end{pmatrix},
  \quad
  K =
  \begin{pmatrix}
    1 \\
    -1728 \\
    12 \\
    9072
  \end{pmatrix}
\end{displaymath}
and $\tilde\Delta = \diag(1,432,-12,-1296)$.

We can check that $W b_1 = 8 \neq 0$. Consequently, the system $A x =
b_1$ does not have a solution. On the other hand, $W b_2 = 0$ and the
solution set for $A x = b_2$ is
\begin{displaymath}
  \tilde\Delta^{-1} S b + \colspace \tilde\Delta^{-1} K
  =
  \begin{pmatrix}
    0 \\ -37/6 \\ 25/6 \\ -77/6
  \end{pmatrix}
  +
  \colspace
  \begin{pmatrix}
    1 \\ -4 \\ -1 \\ -7
  \end{pmatrix}.
\end{displaymath}



\section{Conclusions}
We have shown that fraction-free LU and QR decompositions can contain 
significant common factors, and we have shown how these can be beneficially
removed to obtain more compact decompositions.
Moreover, their removal makes the decomposition unique.

We considered removing the common factors as soon as they can be detected during the
computation of the decompositions.  
This would require either discovering the GCDs by direct computation, or by predicting them by different, preferably simpler, computations.
Although we have displayed here mechanisms that generate common factors, and which lend themselves to predictions through relatively simple calculations, there are other mechanisms which we have not discussed. These require more extensive computations to predict, and quickly leave the realm of
reasonable strategies. Therefore we have concluded that it is most sensible to leave
common factor identification to the final stage of decomposition.

We hope that reduced decompositions can be implemented as the standard form in 
future computer-algebra systems.

\section{Acknowledgments}

This work was supported in part by the Austrian Science Fund (FWF)
grant SFB50 (F5009-N15).

We would like to thank Prof.\ Kevin G.\ Hare and Univ.-Doz.\ Dr.\ Arne
Winterhof for helpful.


\end{document}